\title{Using Symmetries to Lift Satisfiability Checking\thanks{This research received funding from the Flemish Government under the ``Onderzoeksprogramma Artificiële Intelligentie (AI) Vlaanderen'' programme.}}
\author {
    Pierre Carbonnelle\textsuperscript{\rm 1},
    Gottfried Schenner\textsuperscript{\rm 2},
    Maurice Bruynooghe\textsuperscript{\rm 1},
    Bart Bogaerts\textsuperscript{\rm 3},
    Marc Denecker\textsuperscript{\rm 1}
}
\renewcommand{\tnote}[1]{\textsuperscript{#1}}
\newcommand\AddLabel[1]{%
  \refstepcounter{equation}
  (\theequation)
  \label{#1}
}
\newcolumntype{M}{>{\hfil$\displaystyle}X<{$\hfil}} 
\newcolumntype{L}{>{\collectcell\AddLabel}r<{\endcollectcell}}
\newcommand{\advanced}[1]{} 
\newcommand{\red}[1]{{\color{red}#1}}
\newcommand\m[1]{\ensuremath{#1}\xspace}
\newcommand{\ignore}[1]{}
\newcommand{\free}{\m{\mathit{vars}}}
\newcommand{\mpc}[1]{\m{\mathit{#1}}}
\newcommand{\Hole}{\m{\mathit{Hole}}}
\newcommand{\holeOf}{\m{\mathit{holeOf}}}
\newcommand{\Pigeon}{\m{\mathit{Pigeon}}}
\newcommand{\isIn}{\m{\mathit{isIn}}}
\newcommand{\vars}{\m{\mathit{vars}}}
\newcommand{\F}{\m{\phi}}
\newcommand{\RC}{\m{\mathit{RC}}}
\newcommand{\Th}{\m{\mathcal{T}}}
\newcommand{\ConfigObject}{\m{\mathit{ConfigObject}}}
\newcommand{\Tr}{\textbf{true}} 
\newcommand{\I}{\m{\mathit{I}}}
\newcommand{\V}{\m{\mathit{\Sigma}}}
\newcommand{\Exp}{\m{\mathit{exp_{\pi}}}}
\newcommand{\mul}{\m{\mathit{mul}}}
\newcommand{\lcm}{\m{\mathit{lcm}}}
\newcommand{\LCM}{\m{\mathit{Lcm}}}
\newcommand{\MUL}{\m{\mathit{Mul}}}
\newcommand{\Mul}{\m{\mathit{Mul}}}
\newcommand{\tr}{\chi}
\newcommand{\ch}[1]{\m{\chi(#1)}} 
\newcommand{\SUM}[3]{\m{\sum_{\begin{subarray}{c}#1:#2\end{subarray}}#3}}
\newcommand{\SUMW}[3]{\m{\sum_{#1}(#3 \mpc{~if~}#2)}}
\newcommand{\bard}{\m{\bar{d}}}
\newcommand{\barl}{\m{\bar{l}}}
\newcommand{\bart}{\m{\bar{t}}}
\newcommand{\barT}{\m{\bar{T}}}
\newcommand{\barx}{\m{\bar{x}}}
\newcommand{\mydots}{\makebox[1em][c]{.\hfil.\hfil.}}
\newcommand{\tuple}[2]{\m{#1, \mydots, #2}}
\newcommand{\fota}{FO(Type, Aggregate)}
\newtheorem{definition}{Definition}
\newtheorem{example}{Example}
\newtheorem{proof}{Proof}
\newtheorem{proposition}{Proposition}
\newtheorem{theorem}{Theorem}
\newtheorem{lemma}{Lemma}
\newcommand\footnoteref[1]{\protected@xdef\@thefnmark{\ref{#1}}\@footnotemark}
\def \ifempty#1{\def\temp{#1} \ifx\temp\empty }
\NewDocumentCommand{\advice}{om}{%
\par {\color{OliveGreen} \textbf{Advice:} #2  
\ifempty{#1} {} \else {\bf In this case: #1} \fi 
} \par
}
\newsavebox\VerbTIPcommandforconsistentformatting
\newcommand\uriname[1]{\texttt{#1}\xspace}
\newsavebox\VerbTIPnobracketsarounddefinitionsWRONG
 \newcommand{\leqp}{{\leq_p}}
\newsavebox\VerbTIPnobracketsarounddefinitionsRIGHT
 \newcommand{\leqp}{\leq_p}
\begin{document}

\maketitle

\begin{abstract}
    \ignore{
        Generative configuration problems are configuration problems in which the number of some components in a valid configuration is not known prior to solving the problem.
        They are typically solved by step-wise increasing the assumed number of such components until a
        solution is found by non-generative configuration solvers.
        Solutions to such problems often have components that are
        interchangeable. They are the source of redundancies in the
        search space. The standard approach to reduce the search space, and thus improve search performance, is to add symmetry breaking rules.
    }

We analyze how symmetries can be used to compress structures (also known as interpretations) onto a smaller domain without loss of information. This analysis suggests the possibility to solve satisfiability problems in the compressed domain for better performance. Thus, we propose a 2-step novel method: (i) the sentence to be satisfied is automatically translated into an equisatisfiable sentence over a ``lifted'' vocabulary that allows domain compression; (ii) satisfiability of the lifted sentence is checked by growing the (initially unknown) compressed domain until a satisfying structure is found.
The key issue is to ensure that this satisfying structure can always be expanded into an uncompressed structure that satisfies the original sentence to be satisfied.

We present an adequate translation for sentences in typed first-order logic extended with aggregates. Our experimental evaluation shows large speedups for generative configuration problems.  The method also has applications in the verification of software operating on complex data structures.  Our results justify further research in automatic translation of sentences for symmetry reduction.


\end{abstract}

\section{Introduction}

In made-to-order manufacturing, 
the \emph{configuration problem} is the problem of finding a configuration of components
that satisfies the customer requirements and feasibility constraints~\cite{felfernig2014knowledge}.
Such problems can be solved by choosing a formal vocabulary and by representing the customer requirements and the feasibility criteria as a logic sentence to be satisfied.
A structure satisfying the sentence (a \textit{model}) represents an acceptable configuration.

Methods to solve configuration problems do not scale well, and various heuristics have been used to improve performance~\cite{schenner2017techniques}.
Configurations often have components that are
interchangeable. They are the source of many redundancies in the
search space that negatively impact performance. The standard approach is to add symmetry breaking rules~\cite{DBLP:conf/kr/CrawfordGLR96}.
Here, we use another approach: we reformulate the problem to reduce symmetries~\cite{gent2006symmetry}.
This approach has been less studied, and is more an art than a science.

We observed that solutions to configuration problems can be
compressed in what we call a ``lifted model'', and that the solution of the original problem can be
obtained by expanding the lifted model.
This suggested that configuration problems, and, more generally, satisfiability problems, could be solved in the compressed domain: if the compressed domain is significantly smaller, it could lead to better performance.

As a toy example, consider the following pigeonhole problem:
``given 10 pigeons and 5 pigeonholes, assign each pigeon to a pigeonhole such that each hole has (at most) 2 pigeons.''
With the appropriate vocabulary, a solution
is:
\begin{align}
    \Pigeon=\{&p_1,\ldots, p_{10}\} \label{ph:c1} \\
    \Hole=\{&h_1,\ldots, h_5\}\\
    \isIn=\{&(p_1, h_1), (p_2,h_2), \ldots, (p_5,h_5), \label{ph:c3} \\
        &(p_6,h_1), (p_7,h_2), \ldots, (p_{10},h_5)\} \nonumber
\end{align}

Note the symmetries:
another solution is obtained by exchanging 2
pigeons or 2 holes in the interpretation of $\isIn$.

This solution can be compressed to:
\begin{align}
    \Pigeon&=\{\mathbf{p}_1\} \label{ph:l1} \\
    \Hole&=\{\mathbf{h}_1\}\\
    \mul &= \{\mathbf{p}_1 \mapsto 10, \mathbf{h}_1 \mapsto 5\} \\
    \isIn&=\{(\mathbf{p}_1,\mathbf{h}_1)\} \label{ph:l4}
\end{align}
where the $\mul$ function indicates how many concrete domain elements each ``lifted'' domain element represents, and thus allows the domain compression.
The lifted pigeon (resp. hole) represents 10 concrete pigeons (resp. 5 concrete holes).

Even though the compressed structure only has 2 domain elements ($\mathbf{p}_1, \mathbf{h}_1$, excluding the naturals), it contains enough information to allow us to \emph{expand} it into a
model isomorphic to the original model. Following the theory we
develop in Section 3, the expansion is as follows:
\begin{align*}
    \mathbf{p}_1 & \leadsto p_1, \ldots, p_{10} \\
    \mathbf{h}_1 & \leadsto h_1,\ldots, h_5 \\
    (\mathbf{p}_1,\mathbf{h}_1) & \leadsto \{(p_1,h_1), \ldots, (p_5,h_5), (p_6,h_1), \ldots, (p_{10},h_5)\}
\end{align*}



Furthermore, in Section~\ref{sec:lifted-theory}, we present a translation of sentences in typed first-order logic extended with aggregates into sentences that allow domain compression.
We show that a sentence in that language is equisatisfiable with its translation, given the number of concrete elements in each type: if either one is satisfiable, the other one is too.
This analysis allows us to solve satisfiability problems in the compressed domain. The novel method operates in two steps: (i) the sentence to be satisfied is automatically translated into a ``lifted'' sentence over the ``lifted'' vocabulary; (ii) satisfiability of the lifted sentence is checked by step-wise growing the compressed domain\footnote{An iterative method is required because the size of the compressed domain is not known in advance.} until a satisfying structure is found.
Crucially, this satisfying structure can always be expanded into a model of the original sentence.

Notice that the lifted model for a variation of the pigeonhole problem above with 100 times more pigeons and holes has the same number of lifted domain elements as the base case:
the multiplicity of each lifted domain element is simply multiplied by 100.
As a result, and unlike traditional symmetry breaking methods, our method solves that pigeonhole problem in constant time with respect to the domain size (excluding the naturals).

We evaluate this method by comparing the time needed to find solutions for generative configuration problems discussed in the literature.
Our method has significantly better performance than the traditional one for problems whose solution can be substantially compressed.


Our paper is structured as follows:
after introducing our notation, we analyze how symmetries can be used to compress structures onto a smaller domain without loss of information (Section~\ref{sec:lifted-model}), and describe how to  lift a concrete sentence so that it is equisatisfiable with the lifted sentence (Section~\ref{sec:lifted-theory});
we describe the method, evaluate it on generative configuration problems (Section~\ref{sec:evaluation}),
and discuss applications in Boolean Algebra of sets with Presburger Arithmetic (BAPA) (Section~\ref{sec:BAPA})  before concluding with a discussion.

\section{Preliminaries}

This section introduces the logic language supported by our method, and the concept of permutation of a structure.



\subsection{Typed first order logic with aggregates}
We call \fota~the language we support. 
We assume familiarity with first order logic~\cite{DBLP:books/daglib/0076838}.
A \emph{vocabulary} \V is a set of type, predicate, and function symbols.
Predicates and functions have a type signature (e.g. ${f:\barT\rightarrow T}$, where $\barT$ denotes a tuple of types).
Some symbols are pre-defined: type $\mathbb{B}$ (booleans), $\mathbb{Q}$ (rationals), equality, arithmetic operators and arithmetic comparisons.
\emph{Terms} and \emph{formulae} are constructed from symbols according to the usual FO syntactic rules.
We also allow sum aggregates (written as ${ \SUM{\barx \in \barT}{\phi}{t}}$ or as ${ \SUMW{\barx \in \barT}{\phi}{t}}$).
A cardinality aggregate $\#\{\barx \in \barT \mid \phi\}$ is a shorthand for a sum aggregate whose term $t$ is $1$.
Quantification and aggregation can only be over finite types.
Terms and formulas must be well-typed.
A formula without free-variable is called a \emph{sentence}.

A \emph{\V-structure} \I, consists of a domain and an \emph{interpretation} of each symbol of vocabulary \V.
The interpretations of types are disjoint and finite (except $\mathbb{Q}^I$ which is infinite).
The interpretation $p^I$ of a predicate $p$ is a set of tuples $\bar d$ of domain elements of appropriate types.
The interpretation $f^I$ of a function $f$ is a set of pairs $(\bar d,d)$, also denoted by $\bard \mapsto d$.

An \emph{extended structure} is a structure with a variable assignment, i.e., with a mapping from variable $\bar x$ to values $\bar d$, denoted $[\bar x: \bar d]$.
The value of a formula or term in an extended structure is computed according to the usual FO semantic rules, which require that the interpretation of function symbols be total functions over their domain.
A \V-structure is a \emph{model} of a sentence if the value of the sentence is $\Tr$ in the structure, i.e., if it \emph{satisfies} it.
 \emph{Satisfiability checking} in typed FO logic is the problem of deciding whether a sentence has a model, given the interpretation of the types.

\subsection{Permutations and orbits}

A \emph{permutation} of a set is a bijection from that set to itself.
We denote a permutation by $\pi$, and its inverse by $\pi^{-1}$.
The \emph{identity} permutation, $\pi^0$, maps every element of the set to itself.
The \emph{order} of a permutation is the smallest positive number $n$ such that $\pi^n$ is the identity permutation.
Hence, the $n$th permutation of an element is equal to its ``$n$th modulo the order'' permutation.

\emph{Cycles} are permutations that map elements in a cyclic fashion. 
A cycle is denoted by $(d^1d^2\cdots d^n)$. 
A permutation over a finite set has a unique decomposition into \emph{disjoint cycles}.
Its order is the least common multiple ($\lcm$) of the length of its cycles.

The permutation of a tuple of elements $\bar d$ is denoted by $\pi(\bard)$ and is equal to $(\tuple{\pi(d_1)}{\pi(d_n)})$.
The \emph{$\pi$-orbit} of an element $d$ (resp. of a tuple of elements $\bar d$) is the set of its repeated permutations $o_{\pi}(d)=\{\pi^i(d) \mid i \in \mathbb{N}\}$ (resp. $o_{\pi}(\bar d)=\{\pi^i(\bar d) \mid i \in \mathbb{N}\}$).



\section{Lossless compression of structures} 
\label{sec:lifted-model}

In this section, we discuss how, and when, a concrete structure with symmetries can be compressed into a lifted structure over a smaller domain without loss of information, and how a lifted structure can be expanded into a concrete one.

Symmetries in a structure $I$ are described by a domain permutation $\pi$~\cite{DBLP:journals/tplp/DevriendtBBD16}:
it is a permutation of the domain of $I$ that maps numbers to themselves, and other domain elements in $T^I$ to domain elements in $T^I$.
Since numbers are mapped to themselves by the permutation, and since types besides $\mathbb{Q}$ are finite, the permutation is composed of cycles.

A domain permutation \emph{induces a structure transformation}.
The interpretation in the transformed structure, denoted $\pi(I)$, is defined as follows:
\begin{itemize}
    \item the type of domain element $d$ in $\pi(I)$ is its type in $I$;
    \item a tuple $\bard'$ is in the transformed interpretation of predicate symbol $p$ if and only if $\pi^{-1}(\bard')$ is in the original interpretation of $p$: \\
    $ p^{\pi(I)} = \{\bard' \mid  \pi^{-1}(\bard') \in p^I\} = \{\pi(\bard) \mid  \bard \in p^I\}$;
    \item a tuple $(\bard' \mapsto d')$ is in the transformed interpretation of function symbol $f$ if and only if $\pi^{-1}(\bard')\mapsto  \pi^{-1}(d')$ is in the original interpretation of $f$:
    \begin{align*}
        f^{\pi(I)} &= \{(\bard' \mapsto d') \mid  (\pi^{-1}(\bard') \mapsto \pi^{-1}(d')) \in f^I\} \\
        &= \{(\pi(\bard) \mapsto \pi(d)) \mid  (\bard \mapsto d) \in f^I\}
    \end{align*}
\end{itemize}

\begin{definition}[Automorphism]
    A permutation $\pi$ that transforms a structure into itself, i.e., such that $I = \pi(I)$,  is an \emph{automorphism} of the structure.
\end{definition}

Every structure has at least one automorphism: the identity domain permutation.
Note that an automorphism maps the interpretation of any constant to itself, i.e., $c^I() = \pi(c^I())$.
Thus, the length of the cycle containing $c^I()$ is 1.

We introduce the new concept of \emph{backbone}, which plays a critical role in the compression that we propose.
Essentially, a \emph{backbone} for an automorphism of $I$ is a set of domain elements obtained by picking one element in each cycle, such that the interpretation of any symbol can be reconstructed from its interpretation restricted to the backbone, by applying the automorphism repeatedly.
Formally:

\begin{definition}[Backbone]
\label{def:backbone}
    A \emph{backbone} for an automorphism $\pi$ of $I$ is a subset $S$  of the domain of $I$ such that:
    \begin{itemize}
        \item each cycle $C$ of $\pi$ has exactly one element in $S$;
        \item for each predicate $p/n\in \Sigma$, $p^I$ is the union of the $\pi$-orbits  of the tuples in $p^\I \cap S^n$, i.e.,
        \begin{align}
            p^I = \bigcup_{\bard \in p^\I \cap S^n} o_{\pi}(\bar d) = \{\pi^i(\bar d)\mid \bard \in p^\I \cap S^n, i \in \mathbb{N}\}
        \end{align}
        \item for each function $f/n\in\Sigma$, $f^I$ is the union of the $\pi$-orbits of the tuples in $f^I\cap (S^n\times S)$, i.e,
        \begin{align}
            &f^I = \bigcup_{(\bard \mapsto d) \in f^\I \cap (S^n \times S)} o_{\pi}(\bard \mapsto d)  \\
                &= \{\pi^i(\bard \mapsto d)\mid (\bard \mapsto d) \in f^\I \cap (S^n \times S), i \in \mathbb{N}\} \nonumber
        \end{align}
    \end{itemize}
\end{definition}

\begin{example}
    For the pigeonhole example in the introduction of the paper, a backbone of automorphism $(p_1\cdots p_{10})(h_1\cdots h_5)$ is $S=\{p_1, h_1\}$.  Another is $S=\{p_2, h_2\}$.
\end{example}

In all structures, the set of domain elements is a \emph{trivial} backbone for the identity automorphism. However, not all automorphisms  have a backbone.

\begin{example}\label{ex:iso}
    Let $\{a,b\}$ be a (concrete) domain with one type $T$.
    In structure $I_1$ (resp.\ $I_2$), function symbol $f: T \rightarrow T$ is interpreted as $\{a \mapsto a, b \mapsto b\}$ (resp. $\{a \mapsto b, b \mapsto a\}$).
    Permutation $(a b)$ is an automorphism of both structures.
    The only two subsets $S$ satisfying the first condition of backbone for $(ab)$ are $\{a\}$ and $\{b\}$. Since $f^{I_1}$ can be reconstructed from $f^{I_1}\cap (S\times S)$ for both subsets $S$, both subsets are backbones of $I_1$. However, none is a backbone of $I_2$ (because $f^{I_2}\cap (S\times S)=\emptyset$ for both candidate sets $S$).
\end{example}

A backbone enables us to lift a structure into a structure with a smaller domain, as we now describe.

\begin{definition}[Lifted vocabulary]
    For a vocabulary $\Sigma$, its lifted vocabulary $\Sigma_l$ consists of the symbols of $\Sigma$ and, for any type $T$ of $\Sigma$, a symbol $\mul_T:T \to \mathbb{N} $, called the  \emph{multiplicity} function for $T$.

\end{definition}

We will drop the subscript $T$ in the function symbol $\mul_T$ when this is unambiguous.

\begin{definition}[Lifted structure]
    \label{def:lifted-structure}

    Let $I$ be a $\Sigma$-structure with an automorphism $\pi$ having  backbone $S$.
    A lifted structure $L$  derived from $I$ is a $\Sigma_l$-structure  such that:
    \begin{itemize}
    \item its domain is $S$, called the lifted domain;
    \item for each type predicate $T$, $T^L=T^I \cap S$;
    \item for each predicate symbol $p/n$, $p^L=p^\I\cap S^n$;
    \item for each function symbol $f/n$, $f^L=f^I \cap (S^n\times S)$;
    \item for each $l\in S$, $mul^L(l)=|o_{\pi}(l)|$, the size of the $\pi$-orbit of $l$ in $I$.
    \end{itemize}
\end{definition}

\begin{example}
    Continuing the pigeonhole problem, the lifted structure is described by Equations~\ref{ph:l1}-\ref{ph:l4}.
\end{example}



Given the lifted structure $L$ derived from $I$ and the automorphism $\pi$ on the concrete domain $I$, one can reconstruct $I$, i.e., one can \emph{expand} the lifted interpretations of the type, predicate and function symbols, essentially by closing them under repeated application of $\pi$.
Formally,
\begin{align}
    \Exp(T^L) &= \bigcup_{l\in T^L} o_{\pi}(l) = \{ \pi^i(l) \mid  l\in T^L , i\in \mathbb{N}\} \label{def:expansion-type}\\ 
    \Exp(p^L) &= \bigcup_{\bar l\in p^L} o_{\pi}(\bar l) =\{ \pi^i(\barl) \mid  \barl\in p^L , i\in \mathbb{N}\} \label{def:expansion-predicate}\\ 
    \Exp(f^L) &= \bigcup_{(\bar l \mapsto l)\in f^L} o_{\pi}(\bar l \mapsto l) \nonumber\\
        &= \{ \pi^i(\barl \mapsto l) \mid  (\barl \mapsto l)\in f^L , i\in \mathbb{N}\}. \label{def:expansion-function} 
\end{align}

\begin{example}
    Continuing the pigeonhole problem, the expansion of the lifted structure described by Equations~\ref{ph:l1}-\ref{ph:l4} for automorphism $(p_1\cdots p_{10})(h_1\cdots h_5)$ is the structure described by Equations~\ref{ph:c1}-\ref{ph:c3}.
\end{example}

In our approach, we need to find lifted structures that can be expanded into concrete ones.
We observe that there is a simple construction to expand \emph{any} lifted $\Sigma_l$-structure $L$ with lifted domain $S$ that results sometimes (but not always) in a concrete structure $I$ with an automorphism $\pi$ having backbone $S$.
This construction is as follows.

\begin{definition}[Expansion of lifted domain]
    An \emph{expansion of the domain} $S = \{l_i\mid i \in \mathbb{N}\}$ of the lifted $\Sigma_l$-structure $L$ is a set $D=\{d_i^j \mid \exists i,j \in \mathbb{N}:l_i \in S \land 1\leq j\leq \Mul^L(l_i)\}$ such that the $d_i^j$ are distinct and $\forall i: d_i^1=l_i$.
\end{definition}

We call $d_i^1$ a \emph{base} element.
Notice that  lifted domain elements with multiplicity zero have no image in D.\footnote{Null multiplicities will prove useful for iterative methods in Section~\ref{sec:evaluation}}
We define a permutation on the set $D$ having the cycles $(l_i, d_i^2\dots d_i^{\Mul^L(l_i)})$:

\begin{definition}[Permutation of the expanded domain]
    The permutation of an expansion $D$ of a lifted domain is the function $\pi: D \mapsto D$ such that $\pi(d_i^j)=d_i^{j+1}$ for $1 \leq j<\Mul^L(l_i)-1$, and $\pi(d_i^{\Mul^L(l_i)})=d_i^1$.
\end{definition}

The expansion of a lifted tuple $\bar l$ in a lifted structure is its $\pi$-orbit:
\begin{equation} \label{exp:tuple}
    \Exp(\bar l) = o_{\pi}(\bar l) = \{\pi^i(\bar l) | i \in \mathbb{N}\}
\end{equation}
if none of the expansion of its elements is empty, and is empty otherwise.
The concrete interpretation of symbols is obtained by expanding the tuples in their lifted interpretations, as in Equations (\ref{def:expansion-type}-\ref{def:expansion-function}).


It follows from the correspondence between Definition~\ref{def:backbone} and Equations (\ref{def:expansion-type}-\ref{def:expansion-function}) that, if $\pi$ and $I$ are constructed from $L$ in this way, and if $I$ is actually a structure, then $S$ is a backbone of $\pi$ in $I$.
When $L$ only has strictly positive multiplicities, a lifted structure derived from the expansion of $L$ will be isomorphic to $L$: in some sense, the compression is lossless.
However, $I$ may not be a structure because the interpretation of a function symbol in $I$ might not be a total function on its domain, as the following example shows.

\begin{example}
    Consider the lifted structure $L$ with  domain $S = \{a, b\}$, with types $A^L=\{a\}$, $B^L=\{b\}$, and for $f: A\times A \mapsto B$,  $f^L= \{(a, a) \mapsto b\}$. Finally,  let $\mul(a)=\mul(b)=2$,
    The expanded domain is $\{a^1, a^2, b^1, b^2\}$ with permutation $(a^1a^2)(b^1b^2)$.
    The expansion of $f^L$ is $\{(a^1, a^1) \mapsto b^1, (a^2, a^2) \mapsto b^2\}$.
    There is no entry for  $(a^1, a^2)$, so the expansion is not a total function.

    Consider now the lifted structure $L$ with  domain $S = \{a, b\}$, with type $A^L=\{a\}, B^L=\{b\}$, for $f: A\mapsto B$, $f^L= \{a \mapsto b\}$. Finally,  $\mul(a)=1, \mul(b)=2$.
    The expanded domain is $\{a^1, b^1, b^2\}$ with permutation $(a^1)(b^1b^2)$. The expansion of $f^L$ is $\{a^1 \mapsto b^1, a^1 \mapsto b^2\}$.
    This expansion gives two different values for  $a^1$, so it is not  a function.
\end{example}

To distinguish lifted structures that can be expanded into concrete ones from those that cannot, we introduce the notion of \emph{regularity}.

\begin{definition}[Regular function, regular lifted structure]
    \label{def:reg-structure}\label{def:reg-struct}
    The lifted interpretation of a function symbol is regular if its expansion defines a total function in the concrete domain, i.e., if it specifies exactly one image for every tuple in the concrete domain.
    A \emph{regular lifted structure} is a lifted structure in which the interpretation of each function symbol is regular.
\end{definition}

Now, we define the expansion of a regular lifted structure:

\begin{definition}[Expansion of a regular lifted structure]
    Let $L$ be a regular lifted structure over a lifted vocabulary.
    Then, $I$, the expansion of $L$, is the structure over the concrete vocabulary defined as follows:
    \begin{itemize}
        \item its domain is the expansion of the lifted domain, having permutation $\pi$ derived from $L$; 
        \item for each type symbol $T$, $T^I = \Exp(T^L)$;
        \item for each predicate symbol $p$, $p^I =\Exp(p^L)$;
        \item for each function symbol $f$, $f^I =\Exp(f^L)$.
    \end{itemize}
\end{definition}

The expansion of a regular lifted structure does not involve any search.
The time needed for this expansion is generally negligible (e.g., less than 0.1 sec for 10,000 concrete domain elements).

To further characterize regular functions, we introduce the concept of regular tuples.
We use $\Mul(l_1,\ldots,l_n)$ (resp. $\LCM(l_1,\ldots,l_n)$) as a shorthand for the product of the multiplicities of $l_i$ (resp. their least common multiple $\lcm$).
First, we observe that the size of the expansion (Equation~\ref{exp:tuple}) of a tuple $\bar l$ is finite: it is the order of the permutation defined by the cycles of its elements, i.e., $\LCM(l_1, \ldots , l_n)$.\footnote{Taking the convention that the $\lcm$ of a tuple of numbers containing 0 is 0.}
Also, the expansion has at most $\Mul(l_1,\ldots,l_n)$ tuples; it is then the cross-product of the expansions of its elements.
When those two numbers are identical, we say that the tuple is regular.

\begin{definition}[Regular lifted tuple]
    \label{def:regular-tuple}
    A lifted tuple is regular if and only if its expansion is the cross-product of the expansion of its elements.
\end{definition}

\begin{example}
  Let $(a,b)$ be a tuple of two lifted domain elements with $\mul(a)=2$ and $\mul(b)=4$.
  Its expansion is $\{(a^1,b^1),(a^2,b^2),(a^1,b^3),(a^2,b^4)\}$, of size $\LCM(2,4)=4$.
  Note that, e.g., the tuple $(a^1,b^2)$ does not belong to the expansion: thus, $(a,b)$ is not regular.
  It is regular when $\mul(b)=0$ (the expansion of $b$ and of $(a,b)$ are empty), or when, e.g., $\mul(b)=3$ (the expansion is $\{(a^1,b^1),(a^2,b^2),(a^1,b^3),(a^2,b^1),(a^1,b^2),(a^2,b^3)\}$), of size $\LCM(2,3)=6$.
\end{example}

Nullary and unary tuples are always regular.
An $n$-ary tuple is regular when one of its elements has multiplicity zero, or every pair of its elements have multiplicities that are coprime.

\begin{proposition}[Regular function] 
    \label{def:reg-function}
    A function $f^L$ is regular if, for all tuples $\bar l$ in the domain of $f^L$, it holds that
    (i) $\bar l$ is regular, and
    (ii) the multiplicity of $\bar l$ is a multiple of the multiplicity of its image.
\end{proposition}

\begin{proof}

    First, we show that
    the expansion of $f^L$ gives at least one value for every tuple $\bard$ in the concrete domain of $f$.
    Each element $d_i$ of $\bar d$ is in the expansion of a lifted domain element $l_i \in T_i^L$.
    Tuple $\bar l= (l_1, \cdots, l_n) \in \bar T^L$ is in the lifted domain of $f^L$ as $f^L$ is total; it is regular by (i), hence $\bard $ is in its expansion and $f^I(\bard)$ is in $\Exp(f^L(\barl))$.


   Next, we show that the expansion
   gives at most one value for every tuple $\bard$ in the concrete domain of $f$.
   Let $\bard$ be in the expansion of $\barl$, with $\LCM(\bar l)=m=\MUL(\bar l)$ by (i).
   We thus have $0 < m$.
   The expansion of $f^L$ contains the pairs $\pi^i(\barl \mapsto f^L(\barl))$ and $\pi^{i+n\times m}(\barl \mapsto f^L(\barl))$, for any $\barl$ in the domain of $f^L$, and for any $i$ and $n$.
   The first element of these two pairs are identical by definition of $m$; the second elements are identical by (ii).
\end{proof}

     \section{Translation into a lifted sentence}
    \label{sec:lifted-theory}

    We now present a translation of a sentence in \fota~into a sentence that allows domain compression, such that the translation is equisatisfiable with the original. The translation $\overline{\chi(\phi)}$ of a sentence $\F$ is the conjunction of (i) the transformed sentence $\tr(\F)$, and (ii) sentences expressing regularity conditions.

    The transformation $\chi(e)$ of an expression $e$ is defined recursively in Table~\ref{table:trans}.
    The left column shows the possible syntactical forms in the concrete sentence;
    the middle column shows the transformation;
    the third column shows the regularity constraints added to the translation.
    The bottom part of the table shows the regularity constraints added for each function symbol in the vocabulary to ensure that the function interpretations are regular (Proposition~\ref{def:reg-function}).

    \begin{table*}[t]
    \centering
    \caption{Terms, formulas, and their translation}
    \label{table:trans}\begin{threeparttable}
        \begin{tabularx}\textwidth{@{}MMML@{}}
            \text{Expression $e$}               & \text{Transformation $\chi(e)$} & \text{Regularity condition} \\
            \toprule
            \multicolumn{1}{c}{\text{Term}}                & ~& ~  \\
            \cmidrule{1-1}
            x  &  x  & ~& eq:Xx\\
            c()                                 & c() & ~&  eq:Xc \\
            f(\bart)     & f(\ch{\bart}) & ~   & eq:Xf \\
            t_1 \oplus t_2, \oplus \in \{+, -, \times, \div\}     & \chi(t_1) \oplus \chi(t_2) & ~& eq:Xop \\
            \SUM{\barx \in \barT}{ p(\bar t, s) \land \phi} {t} \tnote{(*)}  & \multicolumn{2}{l}{\SUMW{\barx \in \barT}{ 0<\MUL(\barx) \land p(\chi({\bar t}),\chi(s))   \land \chi(\phi)}{\frac{\Mul(\bar x) \LCM(\tr(\bar t),\tr(s))}{\LCM(\tr(\bar t)) \mul(\tr(s))}\times \chi(t)} } & eq:Xsumf2 \\
            \SUM{\barx \in \barT}{\phi} {t}     & \multicolumn{2}{l}{\SUMW{\barx \in \barT}{0<\MUL(\barx) \land\ch{\phi}} {\MUL(\barx) \times \ch{t}} } & eq:Xsum \\

            \midrule
            \text{Formula}               & ~  & ~\\
            \cmidrule{1-1}

            p()  & p() & ~& eq:Xprop\\
            p(\bart)      & p(\ch{\bart}) & \RC(p(\ch{\bart})) \tnote{(**)} & eq:Xp\\
            t_1 \sim t_2, {\sim} \in \{<,>,\leq,\geq \}   & \ch{t_1} \sim \ch{t_2} & ~ & eq:Xeqa \\
            %

            %
            \phi_1 \bigotimes \phi_2, \bigotimes \in \{\land, \lor, \Rightarrow, \Leftrightarrow \}                   & \ch{\phi_1} \bigotimes \ch{\phi_2} & ~& eq:Xand\\
            \lnot \phi  & \lnot \ch{\phi} &  ~ & eq:XNot\\
            T \triangleq \{t_1, \ldots, t_n\} & \sum_{x \in T} {\mul(x)} = n &~ & eq:T \\
            \forall \barx \in \barT: p(\bar t,s) \Rightarrow \phi \tnote{(*)} &  \multicolumn{2}{l}{$\forall \barx \in \barT : 0 < \MUL(\barx) \land p(\chi({\bar t}),\chi(s)) \Rightarrow \chi(\phi) $} & eq:Xforall2 \\
            \exists \barx \in \barT: p(\bar t,s) \land \phi \tnote{(*)} &  \multicolumn{2}{l}{$\exists \barx \in \barT : 0 < \MUL(\barx) \land  p(\chi({\bar t}),\chi(s)) \land \chi(\phi) $} & eq:Xexists2 \\
            \forall \barx \in \barT: \phi &  \multicolumn{2}{l}{$\forall \barx \in \barT : 0 < \MUL(\barx) \Rightarrow \chi(\phi) $} & eq:Xforall \\
            \exists \barx \in \barT: \phi &  \multicolumn{2}{l}{$\exists \barx \in \barT : 0 < \MUL(\barx) \land \chi(\phi) $} & eq:Xexists \\
        \midrule
            \cmidrule{1-1}
            \text{for each function $f: \barT \rightarrow T$}& \multicolumn{2}{r}{$\forall \barx \in \barT: \Mul(\barx)=\LCM(\barx)$} & eq:f1 \\
            \cmidrule{1-1}
            ~& \multicolumn{2}{r}{$\forall \barx \in \barT: \exists n \in \mathbb{N}: \Mul(\barx)=n \times \mul(f(\barx))$} & eq:f2 \\

        \bottomrule
        \end{tabularx}
        \begin{tablenotes}[flushleft]
            \footnotesize
            \item[(*)] Rules~(\ref{eq:Xsumf2}, \ref{eq:Xforall2}, \ref{eq:Xexists2}) are applied only when $\vars(\bar t) \subseteq \{\barx\}$ and $\vars(s) \cap \{\barx\} = \emptyset$; Rules~(\ref{eq:Xsum}, \ref{eq:Xforall}, \ref{eq:Xexists}) are applied otherwise.
            \item[(**)] $\RC(p(\tr(\bar t)))$ is defined as $\forall \bar x \in \bar T_{\barx}: p(\tr(\bar t)) \Rightarrow \Mul(\barx)=\LCM(\barx) \lor \Mul(\tr(\bar t))= \LCM(\tr(\bar t))$, where $\barx=\free(p(\tr(\bar t)))$.
        \end{tablenotes}
        \end{threeparttable}
    \end{table*}

    Note that sum aggregates and quantified formulas are transformed by specialized rules (Rules~\ref{eq:Xsumf2}, \ref{eq:Xforall2}, \ref{eq:Xexists2}) when possible (see table footnote), and by general rules (Rules~\ref{eq:Xsum}, \ref{eq:Xforall}, \ref{eq:Xexists}) otherwise.
    Generally, it is beneficial to do equivalence-preserving transformations of sentences to obtain sentences of the form allowing application of specialized rules. 
    The specialized rules do not require the translation of $p(\bar t,s)$ by Rule \ref{eq:Xp}, thus avoiding the regularity constraint $RC(p(\chi(\bar t, s)))$.
    E.g., for the atom $t=s$, this regularity constraint would enforce $\mul(t)=mul(s)=1$ for each tuple $(t,s)$ in the lifted equality relation, significantly reducing the possibility of compression.

    The transformation of a sentence consists of adding $0 < \Mul(\bar x)$ filters (to cope with lifted domains elements with an empty expansion) and of multiplying the term $t$ in an aggregate term with some decompression factor:
    in Rule~\ref{eq:Xsum}, it is the number $\Mul(\bar x)$ of possible concrete variable assignments;
    in Rule~\ref{eq:Xsumf2}, it is that number multiplied by the fraction of concrete assignments that make $p(\bar t, s)$ true.
    The regularity condition added for an atom (Rule~\ref{eq:Xp}) ensures the translated atom is equisatisfiable with the original atom.


    \begin{example}
        \label{ex:pigeon}
        The sentence "at most 2 pigeons in each hole": \\
        \indent $\forall h \in \Hole: \#\{p \in \Pigeon \mid \isIn(p,h)\}\leq 2$. \\
        Its transformation by Rule~\ref{eq:Xsumf2} (with $\phi=\Tr$) and \ref{eq:Xforall} is:\footnote{Recall that a cardinality aggregate is a shorthand for a sum aggregate whose term is $1$.}\\
        \indent    $\forall h \in \Hole: 0 < \mul(h) \Rightarrow \\
        \indent\indent \SUMW{p \in \Pigeon }{ 0 < \mul(p)  \land \isIn(p,h) } {\frac{\LCM(p,h) }{\mul(h)}} \leq 2.$

    \end{example}

\begin{theorem}[Equisatisfiability]
    An \fota~sentence is equisatisfiable with its translation, given the number of concrete elements in each type.

\end{theorem}
If $I$ is a model of the sentence, then $L$ constructed by extending $I$ by setting all multiplicities to one is a model of the translated sentence: indeed, the added constraints are trivially satisfied, and the translated sentence is equivalent to the original one. Proving the converse
is long and complex. It is proved by structural induction of two invariants over the syntactic tree of the sentence to be satisfied:
(i) a transformed formula $\chi(\F)$ is true in $L$ under some variable assignment $[\bar x: \bar l_x]$ if and only if $\F$ is true in $I$ under any variable assignments $[\bar x: \bar d_x]$ such that each $d_{x_{i}}$ is in the expansion of $l_{x_{i}}$
(ii) similarly, if a transformed term $\chi(t)$ has value $l$ in the lifted structure $L$ under some variable assignment $[\bar x: \bar l_x]$, then, the expansion of the value $l$ contains the value of the term $t$ in $I$ (the expansion of $L$), for any variable assignment $[\bar x: \bar d_x]$ such that each $d_{x_{i}}$ is in the expansion of $l_{x_{i}}$.
This property holds 
only when the regularity constraints given in Table~\ref{table:trans} hold in the lifted structure.
This explains why these constraints are added to the transformed sentence.
The proof is in Appendix A.

\ignore{
    \section{Lifted Sentence}
    \label{sec:lifted-theory}

    We now present a translation of a sentence in \fota~into a sentence that allows domain compression, such that the sentence is equisatisfiable with its translation.

    The lifted sentence $\overline{\chi(\Th)}$ consists of the transformation $\chi(\Th)$ of the original sentence $\Th$, extended conjuctively with some regularity constraints.
    The transformation $\chi(e)$ of an expression $e$ is defined recursively in Table~\ref{table:trans}.
    The left column shows the possible syntactical forms in the concrete sentence;
    the middle column shows the transformation;
    the third column shows the regularity constraints added to the lifted sentence.
    The bottom part of the table shows the regularity constraints added for each function symbol in the vocabulary to ensure that the function interpretations are regular (Proposition~\ref{def:reg-function}).

    Note that sum aggregates and quantified formulas are transformed by specialized rules (Rules~\ref{eq:Xsumf2}, \ref{eq:Xforall2}-\ref{eq:Xexists2}) when possible (see table footnote), and by general rules (Rules~\ref{eq:Xsum}, \ref{eq:Xforall}-\ref{eq:Xexists}) otherwise.
    Generally, it is beneficial to do equivalence-preserving transformations of sentences to obtain sentences of the form allowing application of specialized Rules~(\ref{eq:Xsumf2}, \ref{eq:Xforall2}, \ref{eq:Xexists2}).
    It is also beneficial to avoid identifying domain elements by constants in the original sentence.
    For example, it is better to assert that the number of pigeons is 100 (using a cardinality constraint) than to assert that the pigeons are identified by the numerals 1 to 100.

    \begin{table*}[t]
    \centering
    \caption{Terms, formulas, and their transformation}
    \label{table:trans}\begin{threeparttable}
        \begin{tabularx}\textwidth{@{}MMML@{}}
            \text{Expression $e$}               & \text{Transformation $\chi(e)$} & \text{Regularity condition} \\
            \toprule
            \multicolumn{1}{c}{\text{Term}}                & ~& ~  \\
            \cmidrule{1-1}
            x  &  x  & ~& eq:Xx\\
            c()                                 & c() & ~&  eq:Xc \\
            f(\bart)     & f(\ch{\bart}) & ~   & eq:Xf \\
            t_1 \oplus t_2, \oplus \in \{+, -, \times, \div\}     & \chi(t_1) \oplus \chi(t_2) & ~& eq:Xop \\
            \SUM{\barx \in \barT}{ t_1 = t_2 \land \phi} {t} \tnote{(*)}  & \multicolumn{2}{l}{\SUMW{\barx \in \barT}{ 0<\MUL(\barx) \land \chi({t_1}){=}\chi(t_2)   \land \chi(\phi)}{\frac{\MUL(\bar x)}{\mul(\chi(t_2))}\times \chi(t)} } & eq:Xsumf2 \\
            \SUM{\barx \in \barT}{\phi} {t}     & \multicolumn{2}{l}{\SUMW{\barx \in \barT}{0<\MUL(\barx) \land\ch{\phi}} {\MUL(\barx) \times \ch{t}} } & eq:Xsum \\

            \midrule
            \text{Formula}               & ~  & ~\\
            \cmidrule{1-1}

            p()  & p() & ~& eq:Xprop\\
            p(\bart)      & p(\ch{\bart}) & \RC(p(\ch{\bart})) \tnote{(**)} & eq:Xp\\
            t_1 \sim t_2, {\sim} \in \{<,>,\leq,\geq \}   & \ch{t_1} \sim \ch{t_2} & ~ & eq:Xeqa \\
            t_1 = t_2 & \ch{t_1} = \ch{t_2} & \RC(\ch{t_1} = \ch{t_2}) \tnote{(***)} & eq:Xeqb \\

            \phi_1 \bigotimes \phi_2, \bigotimes \in \{\land, \lor, \Rightarrow, \Leftrightarrow \}                   & \ch{\phi_1} \bigotimes \ch{\phi_2} & ~& eq:Xand\\
            \lnot \phi  & \lnot \ch{\phi} &  ~ & eq:XNot\\
            \forall \barx \in \barT: t_1 = t_2 \Rightarrow \phi \tnote{(*)} &  \multicolumn{2}{l}{$\forall \barx \in \barT : 0 < \MUL(\barx) \land \chi({t_1})=\chi(t_2) \Rightarrow \chi(\phi) $} & eq:Xforall2 \\
            \exists \barx \in \barT: t_1 = t_2 \land \phi \tnote{(*)} &  \multicolumn{2}{l}{$\exists \barx \in \barT : 0 < \MUL(\barx) \land  \chi({t_1})=\chi(t_2) \land \chi(\phi) $} & eq:Xexists2 \\
            \forall \barx \in \barT: \phi &  \multicolumn{2}{l}{$\forall \barx \in \barT : 0 < \MUL(\barx) \Rightarrow \chi(\phi) $} & eq:Xforall \\
            \exists \barx \in \barT: \phi &  \multicolumn{2}{l}{$\exists \barx \in \barT : 0 < \MUL(\barx) \land \chi(\phi) $} & eq:Xexists \\
        \midrule
            \text{for each function $f: \barT \rightarrow T$} & ~ & ~  \\
            \cmidrule{1-1}
            ~& \multicolumn{2}{r}{$\forall \barx \in \barT: \Mul(\barx)=\LCM(\barx)$} & eq:f1 \\
            ~& \multicolumn{2}{r}{$\forall \barx \in \barT: \exists n \in \mathbb{N}: \Mul(\barx)=n \times \mul(f(\barx))$} & eq:f2 \\

        \bottomrule
        \end{tabularx}
        \begin{tablenotes}[flushleft]
            \footnotesize
            \item[(*)] Rules~(\ref{eq:Xsumf2}, \ref{eq:Xforall2}, \ref{eq:Xexists2}) are applied only when $\vars(t_1) \subseteq \{\barx\}$ and $\vars(t_2) \cap \{\barx\} = \emptyset$; Rules~(\ref{eq:Xsum}, \ref{eq:Xforall}, \ref{eq:Xexists}) are applied otherwise.
            \item[(**)] $\RC(\phi)$ is defined as $\forall \bar x \in \bar T_{\barx}: \phi \Rightarrow \Mul(\barx)=\LCM(\barx)$, where $\barx=\free(\phi)$.
            \item[(***)] 
            This regularity condition is unnecessary and can be dropped when $t_1$ and $t_2$ are numeric.
        \end{tablenotes}
        \end{threeparttable}
    \end{table*}

    \ignore{
        Terms are essentially left unchanged by the transformation, except sum aggregates which are transformed per rules (\ref{eq:Xsumf2}, \ref{eq:Xsum}).
        Rule~(\ref{eq:Xsum}) handles an aggregate of the form $\SUM{\barx \in \barT}{\phi} {t}$: the summed-over term in its transformation is $\chi(t) \times \Mul(\bar x)$ because, by Theorem~\ref{theorem:modelpreservation}, a successful lifted variable assignment for $\chi(\phi)$ expands into $\Mul(\bar x)$ successful variable assignments for $\phi$ in structure $I$ and because $\chi(t)$ has the same numeric value in structure $L$ as $t$ in structure $I$.
        Rule (\ref{eq:Xsumf2}) handles aggregates of the form $\SUM{\barx \in \barT}{t_1=t_2 \land \phi} {t}$, where $t_1$ only uses variables in $\barx$ while $t_2$ does not use any.
        Processing this aggregate by Rule~(\ref{eq:Xsum}) would add a regularity constraint for the equality (by Rule~\ref{eq:Xeqb}), resulting in lower multiplicities and lower performance.
        Rule (\ref{eq:Xsumf2}) avoids this regularity constraint and divides the factor by a value $m = \MUL(\chi(t_1))=\MUL(\chi(t_2))$.
        \footnote{\red{This $m$ is non-zero in relevant terms of the grounding of the sum, thanks to the $0<\Mul(\barx)$ condition in the filter of the transformated sum, and to the regularity condition on functions.}}
        Indeed, given a successful assignment $[\bar x:\bar l]$ for $\chi(t_1=t_2 \land \phi)$ in structure $L$, only 1 out of $m$ assignments in the expansion of $[\bar x:\bar l]$ satisfies the equality in structure $I$.
        This follows from $t_2$ having a value independent of $\barx$ and: 

        \begin{proposition}
            \label{prop:reg-function}
            Let $(\bar l \mapsto l)$ be a tuple in the lifted interpretation of a regular function, with $0<\mul(l)$. Then, for all $i$, there are $\Mul(\barl)\div\mul(l)$ occurrences of $(\cdot \mapsto \pi^i(l))$ in the expansion of $(\bar l \mapsto l)$.
        \end{proposition}

        \begin{proof}
            The expansion of tuple $(\bar l \mapsto l)$ has $\LCM(\barl, l)$ tuples.
            Since this expansion represents a total function over the expansion of $\barl$, it also has $\MUL(\barl)=\LCM(\barl)$ tuples.
            The equation $\LCM(\barl, l)=\LCM(\barl)=\MUL(\barl)$ implies that $\mul(l)$ is a factor in a factor decomposition of $\MUL(\barl)$.
            Since the values of the function are uniformly distributed over the concrete domain by the permutation, the fraction $\Mul(\barl)\div \mul(l)$ is the number of tuples in the concrete domain mapped to $\pi^i(l)$ in the expansion of the lifted value of the function, for any $i$.
        \end{proof}

        \begin{example}
          Let $(a \mapsto b)$ be a tuple in the lifted interpretation of $f$, with $\mul(a)=6$ and $\mul(b)=2$.
          Its expansion is $\{(a^0,b^0),(a^1,b^1),(a^2,b^0),(a^3,b^1),(a^4,b^0),(a^5,b^0)\}$.
          There are $\mul(a)/\mul(b)=3$ occurrences of $(\cdot \mapsto b^0$) in this expansion, i.e., $(a^0,b^0),(a^2,b^0),(a^4,b^0)$.
        \end{example}


        Formulas are essentially left unchanged by the transformation, except for quantified formulas.
        Similar to sum aggregates, quantified formulas are transformed by specialized rules (Rules~\ref{eq:Xforall2}-\ref{eq:Xexists2}) when possible, and by general rules (Rules~\ref{eq:Xforall}-\ref{eq:Xexists}) otherwise.

        The translation adds a regularity constraint for atoms that need one.
        For predicate atom $p(\bart)$ (Rule~\ref{eq:Xp}), a regularity constraint is added to require that the tuples of values for $\vars(\chi(\bart))$ that make $p(\chi(\bart))$ true be regular.
        The next two rules handle the comparison of two terms. When both terms evaluate to a number, no regularity constraint is required. For equality of non-numeric terms, a regularity constraint is added to require that the tuples of values that make the equality true be regular.

    }

    \begin{example}
        \label{ex:pigeon}
        The following sentence states that there is at most 2 pigeons in each hole: \\
        \indent $\forall h \in \Hole: \#\{p \in \Pigeon \mid \isIn(p)=h\}\leq 2$. \\
        Its transformation by Rule~\ref{eq:Xsumf2} (with $\phi=\Tr$) and \ref{eq:Xforall} is:\footnote{Recall that a cardinality aggregate is a shorthand for a sum aggregate whose term is $1$.}\\
        \indent    $\forall h \in \Hole: 0 < \mul(h) \Rightarrow \\
        \indent\indent \SUMW{p \in \Pigeon }{ 0 < \mul(p)  \land \isIn(p)=h } {\frac{\mul(p)}{\mul(h)}} \leq 2.$\\

        Sentences ~(\ref{eq:f1}-\ref{eq:f2}) for function symbol $\isIn$ are conjuctively added to the transformed sentence:\\
        \indent $\forall p \in \Pigeon: \mul(p)=\lcm(\mul(p)).$\\
        \indent $\forall p \in \Pigeon: \exists n \in \mathbb{N}: \mul(p)=n \times \mul(\isIn(p)).$\\
        Rule~(\ref{eq:f1}) is trivially satisfied, though.

    \end{example}


\begin{theorem}[Equisatisfiability]
    An \fota~sentence is equisatisfiable with its transformation.

\end{theorem}

The detailed proof is in the Appendix.
The proof has two parts: we show that (i) if the concrete sentence has a model, then the lifted sentence has a model too; (ii), conversely, if a lifted sentence has a model, then the concrete sentence has a model too.
The first part is proved by showing that the concrete model extended with all multiplicities set to 1 is a model of the lifted sentence.
The second part is proved by inductively showing that, given an extended lifted structure $L[\barx:\bard]$, the value of an original formula (or term) in the expansion of $L[\barx:\bard]$ is in the expansion of the value of the lifted formula (or term) in $L[\barx:\bard]$.
\todo{formally, if enough place}
This property holds only for terms and sub-formulas occurring in the sentence, and only when the regularity constraints given in Table~\ref{table:trans} hold in the lifted structure.
This explains why they are conjuctively added to the transformed sentence.

}  


\section{Evaluation of the method}
\label{sec:evaluation}


\paragraph{Implementation}

The goal of the evaluation is to show that there are satisfiability problems where substantial compression of the domain is possible, and that the lifted models can indeed be expanded into concrete ones.
A problem is solved iteratively, starting with empty lifted domains. Given a domain, the lifted sentence is reduced to a propositional sentence and its satisfiability is determined with a standard satisfiability solver capable of arithmetic reasoning.
If the sentence is unsatisfiable with this domain, the sentence is reduced to a minimal unsatisfiable formula~\cite{DBLP:conf/sat/LynceM04}, and the domains of the types used in that formula are extended with one element.
This process is repeated until a model of the sentence is found (it  does not terminate if the original sentence is unsatisfiable for any domain size, unless one imposes an upper limit on the size of lifted domains).

In many experiments, it was sufficient to support the special Rules~\ref{eq:Xsumf2}, \ref{eq:Xforall2}-\ref{eq:Xexists2} for the case where the atom $p$ is of the form $t_1=s$ (e.g., $\holeOf(p)=h)$.
Then, the transformation of a sum aggregate per Rule~\ref{eq:Xsumf2} simplifies to:
\begin{align}
    \frac{1}{\mul(\chi(s))} \SUMW{\barx \in \barT}{  0<\MUL(\barx) \land \chi({t_1}){=}\chi(s)   \land \chi(\phi)  \nonumber}{& \MUL(\bar x)\times \chi(t) \\ &}
\end{align}
This formula does not use the $\lcm$ function, which is not supported natively by solvers.
In other experiments, we used an interpretation table for $\lcm$.


Problems are expressed in $\mathrm{FO}(\cdot)$\footnote{https://fo-dot.readthedocs.io/}, a Knowledge Representation language with support for types, subtypes, and aggregates.\footnote{Subtypes are subsets of types, and are declared as unary predicates in $\mathrm{FO}(\cdot)$.  Subtypes can be used where types are used: in the type signature of a symbol and in quantification.}
A $\mathrm{FO}(\cdot)$ (lifted) sentence is translated by IDP-Z3~\cite{DBLP:journals/corr/abs-2202-00343} for use with the Z3 SMT solver~\cite{de2008z3}.
The overhead of this translation is negligible.

The source code of our examples is available on GitLab\footnote{\label{repository}\url{https://gitlab.com/pierre.carbonnelle/idp-z3-generative}}.
Tests were run using Z3 v4.12.1 on an Intel Core i7-8850H CPU at 2.60GHz with 12 cores, running Ubuntu 22.04 with 16 GB RAM.
We run a modified version of IDP-Z3 v0.10.8 on Python 3.10.

\paragraph{Pigeonhole problem}
To validate the approach, we first consider the satisfiability problem of assigning each pigeon to one pigeonhole, such that each pigeonhole holds (at most) 2 pigeons.
Function $\holeOf: \Pigeon \mapsto \Hole$ is used to represent this relation.

When there are twice as many pigeons as holes, the lifted solution has 1 lifted pigeon and 1 lifted hole, as shown in the introduction of the paper.  As expected, the time needed to solve the lifted problem is almost constant when it is satisfiable. The correct multiplicities are found quickly by Z3 using a sub-solver specialized for arithmetic.
For example, with 10,000 pigeons, the lifted sentence is solved in only 0.05 sec and the expansion of the lifted model into a concrete one in 0.1 sec. With the same solver and the original sentence, the solution time increases exponentially (4 sec to solve the problem for 30 pigeons). We are aware that symmetry breaking can reduce the complexity, but to the best of our knowledge, solving time is at least linear in the number of pigeons for the best symmetry breaking solvers.

We also validated our method on pigeonhole problems where the relation between pigeon and holes is represented by a binary predicate.  The translated sentence uses an interpretation table for $\lcm$.  Experiments confirm equisatisfiability, but the complexity is quadratic because of the $\lcm$ table.
Finding a more efficient implementation has been left for future work.

\paragraph{Generative configuration problems}
\label{sec:realistic}

Generative configuration problems (GCP) are configuration problems in which the number of some components has to be found: the number of elements in some types is not known in advance.
An iterative method is thus always required to find them.
When the compressed domain is smaller than the concrete domain, the number of iterations needed to solve the lifted sentence is smaller than the number of iterations for the original sentence, leading to better performance.
Hence, GCP is a good application domain for our method.

We evaluate our solving method on three representative GCP discussed in the literature:
\begin{itemize}
    \item the House Configuration and Reconfiguration problem~\cite{DBLP:conf/confws/FriedrichRFHSS11},
    \item the Organized Monkey Village~\cite{DBLP:conf/sat/Reger0V16}
    \item the Rack problem~\cite{DBLP:conf/models/FeinererSS11, DBLP:conf/splc/Comploi-TaupeFS22},
\end{itemize}

These problems are expressed using only an equality predicate and unary symbols, and most regularity constraints introduced by our method are trivially satisfied.

\num\newcommand{\Rack2011}{Rack 2011} 
\num\newcommand{\RackA1}{Rack A1} 
\num\newcommand{\RackA2}{Rack A2} 
\num\newcommand{\RackA3}{Rack A3} 
\num\newcommand{\RackA4}{Rack A4} 
\num\newcommand{\RackA5}{Rack A5} 
\num\newcommand{\RackA10}{Rack A10} 
\num\newcommand{\RackA50}{Rack A50} 
\num\newcommand{\RackA1000}{Rack A1000} 

\num\newcommand{\RackABCD4}{Rack ABCD4} 
\num\newcommand{\RackABCD8}{Rack ABCD8} 
\num\newcommand{\RackABCD12}{Rack ABCD12} 
\num\newcommand{\RackABCD16}{Rack ABCD16} 
\num\newcommand{\RackABCD20}{Rack ABCD20} 

\num\newcommand{\RackC1}{Rack C1} 
\num\newcommand{\RackC2}{Rack C2} 
\num\newcommand{\RackC3}{Rack C3} 
\num\newcommand{\RackC21}{Rack C7} 

\num\newcommand{\HCP1}{HCP 1} 
\num\newcommand{\HRP1}{HRP 1} 
\num\newcommand{\HCP3}{HCP 3} 
\num\newcommand{\HRP3}{HRP 3} 
\num\newcommand{\Monkey1}{Monkey 1} 
\num\newcommand{\Monkey4}{Monkey 4} 
    \newcommand{\PUP}{PUP} 
\num\newcommand{\CADE071}{CADE07-1} 
\num\newcommand{\CADE0722}{CADE07-2b} 
\num\newcommand{\CADE0760}{CADE07-6} 
\num\newcommand{\CADE0762}{CADE07-6b} 

\begin{table}[ht]
    \small
    \center
    \caption{Wall clock time in seconds to solve configuration problems, and number (\#) of used domain elements in models. Yes: lifted sentence, No: original sentence, T: timeout after 200 sec.}
    \label{tab-results}
    \begin{threeparttable}
    \begin{tabular}{l|cc|cc}
        ~ & \multicolumn{2}{c|}{Time} & \multicolumn{2}{c}{\#} \\
        \multicolumn{1}{r|}{Lifted?} & Lifted &  Orig. & Lifted & Orig.\\
        \hline
        \hline
        \HCP1 & \textbf{0.2} &  \textbf{0.2}   & 6 & 10 \\
        \HCP3 & \textbf{0.22} &  1.5   & 6 &  27 \\
        \HRP1 &  0.27 & \textbf{0.23} &  7 & 10 \\
        \HRP3 &  \textbf{0.30} & 6.41 & 8  & 25 \\
        \Monkey1 &  \textbf{0.28} & 184.66    & 6  & 20 \\
        \Monkey4 & \textbf{0.27} &  T     & 6 & 52 \\
        \Rack2011 & \textbf{0.24}   & 8.82 & 5 & 23 \\
        \RackA5 &  \textbf{0.46} & 0.86 &  5 & 19 \\
        \RackA10 & \textbf{0.41} &  37.52    & 5 & 38 \\
        \RackA1000 & \textbf{0.54}  &   T  & 4 & 2250 \\
        \RackABCD4 & \textbf{1.55}   & 4.37  & 13& 20 \\
        \RackABCD8 &  \textbf{2.69}  &  T     & 13 & 40 \\
        \RackABCD20 &   \textbf{3.31}   &  T      & 13 & 100 \\
        \hline
        \RackA1 & \textbf{0.61} &   0.75   & 5 & 7 \\
        \RackA2 & \textbf{0.64} & 1.54 &   5 & 9 \\
        \RackA3 & \textbf{0.62} &   3.69 &   5 & 11 \\
        \RackA4 & \textbf{0.65} &   36.9 &   5 & 13 \\
        \RackA5 & \textbf{0.65} &   T &   5 & 19 \\
    \bottomrule
    \end{tabular}
    \end{threeparttable}
\end{table}

The top half of Table~\ref{tab-results} shows results with Z3 (with automatic translation of the sentence), the bottom half with the OR-tools solver\footnote{https://google.github.io/or-tools/, used  via CPMpy~\cite{guns2019increasing}.} (with manual translation).
A more detailed table is provided in Appendix B. 

Solutions to problems with higher suffixes have more components than similar problems with lower ones.
The table shows near constant time performance on the Rack ABCD problems.
The lifted methods solve each of the 20 occurrences of the ABCD Racks problem in \cite{DBLP:conf/splc/Comploi-TaupeFS22} in less than 5 seconds (instead of 6 minutes on average in that paper, using an ASP solver).
These results show that our method has significant merits for solving problems with symmetries and a preponderance of unary symbols.

    \section{Boolean Algebra of sets with Presburger Arithmetic}\label{sec:BAPA}

    Lifted domain elements represent disjoint sets of concrete domain elements.
    A model search in the lifted domain can be seen as a model search involving sets.
    Hence, our work is highly related to Boolean Algebra of sets with Presburger Arithmetic (BAPA), a logic that can express constraints on the cardinality of sets, of their unions and of their intersections~\cite{DBLP:conf/cade/KuncakR07, DBLP:conf/vmcai/SuterSK11, bansal2016new}.
    Some problems from verification of properties of software operating on complex data structures contain fragments that belong to BAPA.

    A sample BAPA statement is $|A| > 1 \land A \subseteq B \land |B \cap C | \leq 2$, where $A, B, C$ are sets, and $|A|$ is the cardinality of $A$.
    The equivalent expression in \fota~is
    $(\#\{d: A(d)\} > 1) \land (\forall d : A(d) \Rightarrow B(d)) \land (\#\{d: B(d) \land C(d)\} \leq 2)$, where $A, B, C$ are now unary predicates over a (unique) type whose interpretation is to be found.
    This expression can be lifted and solved using our approach (see the ``theories/BAPA'' folder in our repository\footnoteref{repository}). 
    In general, any BAPA sentence can be converted to a concrete \fota~sentence that only uses unary predicates, and that can be lifted without regularity constraints.
    Hence, our approach offers a simple way to solve BAPA problems using any solver capable of reasoning over the rationals.
    The performance advantage should be evaluated in future work.

    On the other hand, the conversion of a concrete \fota~sentence to BAPA logic is challenging because \fota~is more expressive:
    it allows n-ary relations, functions, sum aggregates, and product of cardinalities.
    Thus, a BAPA solver could not be used to solve, e.g., generative configuration problems.
    Still, extensions of BAPA solvers to handle finite n-ary relations have been implemented in CVC4~\cite{DBLP:conf/cade/MengRTB17}.

    A simple approach to represent structures in BAPA is to use disjoint subsets of the concrete domain, called \emph{Venn regions}, so that the cardinalities of any set of interest is the sum of the cardinalities of its Venn regions.
    Unfortunately, the number of Venn regions grows exponentially with the number of sets of interest.
    Hence, various methods have been developed to reduce this growth, e.g., by creating new Venn regions lazily when required~\cite{bansal2016new}.
    Venn regions are similar to our lifted domain elements, and the iterative method that creates new Venn region is similar, to some extent, to our iterative method that creates new lifted elements when required.

    Our method cannot prove the unsatisfiability of a sentence.
    By contrast, efficient methods have been proposed to prove the unsatisfiability of BAPA sentences~\cite{DBLP:conf/vmcai/SuterSK11}.



    \section{Discussion}
    \label{section:conclusion}

    Unlike the traditional approach of adding symmetry breaking conditions to a formula to accelerate satisfiability checking, we \emph{automatically translate} the formula to a form with fewer symmetries.
    Our results demonstrate the benefits of this approach for problems with symmetries and a preponderance of unary symbols, and justify further research in automatic translation for symmetry reduction.


    Much work remains to be done in evaluating the method and determining problem areas where it is useful. Moreover, the relationship with BAPA logic is worth further exploration because, unlike our method, BAPA can identify unsatisfiable sentences and our method does not terminate for such problems.
    Efficient implementation of the $lcm$ function is another area of research.
    Worth exploring is also the relevance for related areas such as model counting.

    Ideally, any compression of a model of a sentence (Section~\ref{sec:lifted-model}) should be a model of the lifted sentence. It is unlikely that this is achievable, as it requires a translation that avoids all regularity constraints apart from those for functions. This requires being able to predict the fraction of concrete variable assignments (in the expansion of a lifted assignment) that make a concrete formula true, given that the translated formula is true with the lifted assignment.
    In particular, filter formulas (in an aggregate) having both free and quantified variables are problematic.

    Still, many refinements of the translation in Table~\ref{table:trans} are feasible. We have already worked out some refinements, but they cannot be presented within the current space constraints.
    They will be presented in the PhD thesis of one of the authors~\cite{carbonnellePhD}.

\ignore{
    As discussed above, the performance benefit of our approach to BAPA problems should be investigated in future work.
    Conversely, the method to prove the unsatisfiability of BAPA sentences could be adapted to enhance our method.

    Ideally, any compression of a model of a sentence, as described in Section~\ref{sec:lifted-model}, should be a model of the lifted sentence.
    As our pigeonhole ``2-2'' example in Section~\ref{sec:evaluation} shows, the translation in Table~\ref{table:trans} does not have that property:  there are dense compressions of models that are not models of the lifted sentence, in particular for sentences using n-ary symbols.
    Indeed, the special rules apply to formulas with an equality: these rules should be generalized to any predicate.
    This does not invalidate the method (it will still find a non-optimum compression of the model), but the presented translation may not realize the full potential of performance gains.
    Refinements of the translation and proof of its optimality are an interesting line of research.
}

\bibliography{references}

\newpage



\section{Appendix A}\label{sec:proof}

In this section, we prove that the expansion of a model of the translated sentence is a model of the original sentence with the given size of the concrete types. 

In what follows, we use $L$ to denote a lifted model of the translation of the sentence to be satisfied, and $I$ to denote its expansion, whose automorphism $\pi$ has the domain of $L$ as backbone.
$L$ is a regular structure because it satisfies the regularity condition for functions (part (ii) of the translated sentence); 
$I$ has the given size of the concrete types because it is the expansion of $L$ which satisfies the sentences stating those sizes in the translated sentence (part (iii) of the translated sentence). 
We use $l$ for lifted domain elements from $L$ and $d$ for domain elements from $I$.

We begin with a definition.

\begin{definition}[Expansion of a variable assignment]\label{def:expassign}
    The expansion of a lifted variable assignment $[\bar x:\bar l]$  (notation $\Exp([\bar x:\bar l])$) over the domain of  $L$ is the set of concrete variable assignments $[\bar x: \bar d]$ over the domain of $I$ such that $\bard \in (\bigtimes_i \Exp(l_i))$.
\end{definition}

We say that $[\bar x: \bar d]$ is an \emph{instance} of $[\bar x:\bar l]$.
Note that there are $\Mul(\bar x)$ instances of $[\bar x: \bar l]$, but only $\LCM(\bar x)$ tuples in the expansion of $\bar l$. 
Thanks to the $0 < \Mul(\barx)$ filters in the transformation of quantified formulas and aggregates, we consider only the lifted variable assignments that have instances, unless otherwise stated.

By abuse of language, we use $L$ and $I$ to denote a structure extended with a variable assignment: when we say that an expression $\tr(e)$ is evaluated in $L$, and it is in the scope of quantifiers or sum aggregates that collectively declare variables $\bar x$, we mean that it is evaluated in $L[\bar x: \bar l]$;
when we say that $e$ is evaluated in $I$, we mean that it is evaluated in $I[\bar x:\bar d]$, with $[\bar x: \bar d]$ an instance of $[\bar x:\bar l]$. 

The theorem is a direct consequence of the following two related properties:

\begin{proposition}[Preservation of terms]\label{prop:hypothesis-term}
    Let $\tr(t)$ be the transformation of a term $t$ occurring in the sentence to be satisfied.
    There exists an $i$ such that $t^I = \pi^i(\chi(t)^L)$.
\end{proposition}

In words, the value of a term $t$ in structure $I$ is part of the
expansion of the value of $\tr(t)$ in $L$, for any variable assignment in $I$ that is an instance of the variable assignment in $L$.
Note that a number is mapped to itself by $\pi$:  Proposition~\ref{prop:hypothesis-term} implies that the value of a numeric term in $I$ is the value of the transformed term in $L$.

\begin{proposition}[Preservation of formulae]\label{prop:hypothesis-formula}
    Let $\tr(\F)$ be the transformation of a formula $\F$ occurring in the sentence to be satisfied.
    $\tr(\F)$ is true in $L$ if and only if $\F$ is true in $I$. 
\end{proposition}

In words, $\F$ in $I$ has the truth value of $\tr(\F)$ in $L$, for any variable assignment in $I$ that is an instance of the variable assignment in $L$. 

Note that Proposition~\ref{prop:hypothesis-formula} implies our goal: the expansion of a
model of the translated sentence is a model of the original sentence.  In fact,
an implication in Proposition~\ref{prop:hypothesis-formula} would be enough: this can be
exploited to weaken the constraints in the translation of top-level sentences
(but there is not enough space to do so in this paper). 

The proof of these two properties is by structural induction over the syntactic tree of the sentence to be satisfied.
The base cases are the leaves of the syntactic tree; for the inductive step, we prove that the property holds in a node of the syntactic tree when it holds in all the branches of the node.
We consider every language construct and its translation per Table~1. 
In what follows, we consider only terms and formulas occurring in the sentence to be satisfied.

\paragraph{Simple cases}
Proving the base cases is trivial (Rules~14\ignore{eq:Xx}, 15\ignore{eq:Xc}, 20\ignore{eq:Xprop}): the value of $x$ in $I$ is in the expansion of its value in $L$, and this expansion forms a cycle of $\pi$; the value of a constant $c()$ is the same in both structures because their multiplicity is 1; the truth value of $p()$ is also the same by construction of $p^I$.

Proving the correctness of the inductive step for arithmetic operations and comparisons, as well as for the boolean connectives (Rules~17\ignore{eq:Xop}, 22\ignore{eq:Xeqa}, 23\ignore{eq:Xand}, 24\ignore{eq:XNot}) is also trivial because the value of their numeric sub-terms and boolean sub-formulas are the same in both structures, by the induction hypothesis and because the multiplicity of numbers is 1.


\paragraph{Function application}

We now consider the case where the term $t$ is of the form $f(\bar t)$ (Rule~16\ignore{eq:Xf}).

By the induction hypothesis, for each argument $t_j$, there is an $i_j$ so that $t_j^I = \pi^{i_j}(\chi(t_j)^{L})$.
Thus, $\chi(t_j)^{L}$ is the (unique) backbone element that generates $t_j^I$.
Because structure $L$ is regular, the interpretation of function $f$ is regular: all the elements in its concrete domain can be obtained by repeated application of $\pi$ to a backbone tuple in its domain.  
The backbone tuple for $\bar t^I$ is $(\chi(t_1)^{L},\dots,\chi(t_n)^{L})=\chi(\bart)^{L}$.
Then, \\
$t^I = f^I(\bart^I)$ by definition of $(f(\bart))^I$ \\
$= f^I(\pi^i(\chi(\bart)^{L}))$ for some i, by above argument\\
$ = \pi^i(f^I(\chi(\bart)^{L}))$ by automorphism of $\pi$\\
$= \pi^i(f^L(\chi(\bart)^{L}))$ by property of backbone \\
$= \pi^i(\chi(t)^L)$

\paragraph{Predicate application} To prove the correctness of the inductive step for predicate applications (Rule~21\ignore{eq:Xp}), we start with a lemma about variable assignments.

\begin{lemma}\label{prop:tuple-expansion}
    The value of $\bar t$ in $I$ for the variable assignment  $[\bar x: \pi^i(\bar l)]$ is $\pi^i(\chi(\bar t)^{L[\barx: \barl]})$, for any $i$.
\end{lemma}
This can be easily proved for each $t_i$ by structural induction over its syntactic structure.

\begin{lemma} \label{prop:var-exp} 
    If $\Mul(\bar x) = \LCM(\bar x)$ holds in $L[\bar x:\bar l]$,
    then, for each instance $[\bar x:\bar d]$ of $[\bar x:\bar l]$, there exists an $i$ such that $\bar d= \pi^i(\bar l)$.
\end{lemma}

The lemma holds because the expansion of $\bar l$ has $\LCM(\bar l)= \MUL(\bar l)$ tuples, so $\bar l$ is regular: each tuple in the expansion of $[\bar x: \bar l]$ assigns a value in the expansion of $\bar l$.

The next lemma states some properties of atoms.

\begin{lemma}\label{prop:atomexpansion}
    Let $p(\bar t)$ be an atom with free variables in $\bar x$. 
    The following properties hold:
    \begin{description}
            
        \item [i] If $p(\tr(\bar t))$ is true in $L$, then $p(\bar t)$ is true in $I$ for at least $\LCM(\bar x)^L$ instances of the lifted variable assignment and  false for at most $(\Mul(\bar x) - \LCM(\bar x))^L$ instances.
    
        \item [ii] If $p(\tr(\bar t))$ is true in $L$ and either $\MUL(\chi(\bar t))=\LCM(\chi(\bar t))$ or $\MUL(\bar x)=\LCM(\bar x)$ is true in $L$, then $p(\bar t)$ is true in $I$ for all $\Mul(\bar x)^L$ instances of the lifted variable assignment.
    
        \item [iii] If $p(\tr(\bar t))$ is false in $L$, then $p(\bar t)$ is false in $I$ for all $\Mul(\bar x)^L$ instances of the lifted variable assignment.

   \end{description}
\end{lemma}

Property (i) follows from Lemma~\ref{prop:tuple-expansion}, from the fact that $p(\bart) = p(\pi(\bart))$ in $I$ by automorphism of $\pi$, and from the fact that there are $\LCM(\bar x)^L$ distinct permutations of $[\bar x: \bar l]$.
Furthermore, there are $\MUL(\bar x)^L$ distinct instances of the lifted variable assignments.
$\MUL(\tr(\bar t))=\LCM(\tr(\bar t))$ in $L$ implies that every possible value of $\bar t$ in $I$ is in the expansion of the value $\chi(\bart)^L$ and $\MUL(\bar x)=\LCM(\bar x)$ in $L$ implies (Lemma~\ref{prop:var-exp})  that every variable assignment $[\bar x:\bar d]$ is a permutation of $[\bar x:\bar l]$. Both conditions imply (ii). 
Property (iii) holds because of the definition of the expansion of a predicate (Equation~11\ignore{def:expansion-predicate}). 

Note that, when the conditions in (ii) hold, properties (ii) and (iii) together imply that the preservation of value holds for the atom.  The condition in (ii) is the regularity condition that is added to the translation of an atom by Rule~21\ignore{eq:Xp}. 
Since $L$ satisfies this condition, the value for predicate application is preserved (Rule~21\ignore{eq:Xp})


\paragraph{Quantification}  
The translation of $\exists \bar x: p(\bar t[\bar x], s[\bar y]) \land \F$ (Rule~26\ignore{eq:Xexists2}) adds the atom $0<\Mul(\bar x)$ to the conjunction. If $\Mul(\bar x) =0$ in $L$, the sub-formula is false in $L$; it is also false in $I$ because, in $I$, there are no tuples instantiating $\bar x$. 

So, we are left with the case that $0< \MUL(\bar x)$.  
Let's denote by $\psi$ the formula $p(\chi({\bar t}),\chi(s)) \land \chi(\phi)$.

If $\psi$ is false in $L$ for some lifted variable assignment, then it is false in $I$ for all variable assignments that are instances of the lifted one. Indeed, if the atom $p(\tr(\bar t),\tr(s))$ is false in $L$, it follows from Lemma~\ref{prop:atomexpansion}-(iii) that the atom is false in $I$; if, on the other hand, $\tr(\F)$ is false in $L$, it follows from the induction hypothesis that $\F$ is false in $I$.

If $\psi$ is true in $L$ for some lifted variable assignment $[\bar x: \bar l_x :: \bar y:\bar l_y]$ ($\bar y$ are the variables of $s$; they do not occur in $\bar x$).
Consider an instance $[\bar y:\bar d_y]$. By the induction hypothesis, $s$ evaluates for this instance in $I$ to $\pi^i(\chi(s)^L)$ for some $i$.
By Lemma~\ref{prop:tuple-expansion}, $\bar t$ evaluates for the instance $[\bar x: \pi^i(\bar l_x)]$ to $\pi^i(\chi(\bar t)^L)$.
The tuple $\pi^i(\chi(\bar t)^L,\chi(s)^L)$ is in the expansion of the interpretation of $p$ by automorphism $\pi$ and by hypothesis on $\psi$, and so is true in $I$. Because $\F$ is also true for this variable assignment by the induction hypothesis, $\exists \bar x:  p(t[\bar x], s[\bar y]) \land \F$ is then also true in $I$.

The proof for the general case (Rule~28\ignore{eq:Xexists}) follows the same reasoning (without the $p$ atom).

Rule~25\ignore{eq:Xforall2}, the special rule for a universally quantified formula, is proved by transforming the universal quantification to an existential quantification using the law $(\forall \barx: \phi) \equiv \lnot(\exists \barx:\lnot\phi)$, by applying Rule~26\ignore{eq:Xexists2}, and transforming the resulting existential quantification to a universal quantification. 
A similar reasoning as above holds for the general rule of universal quantification (Rule~27\ignore{eq:Xforall}).

\begin{table*}[!t]
    \small
    \center
    \caption{Wall clock time in seconds to solve configuration problems, and number (\#) of used domain elements in models. The top half shows results with Z3, the bottom half with OR-tools. T: timeout after 200 sec.}
    \label{tab-results2}
    \begin{threeparttable}
    \begin{tabular}{l|cccc|cccc|cc}
        ~ & \multicolumn{8}{c|}{Performance} & \multicolumn{2}{c}{\#} \\
        Lifted? & \multicolumn{4}{c|}{Yes (our approach)} &  \multicolumn{4}{c|}{No} & Yes & No\\
        Typed? & no &no &yes &yes & no &no&yes&yes&  & \\
        Domain change & $+1$&$\times 1.5$& $+1$&$\times 1.5$ &$+1$&$\times 1.5$& $+1$&$\times 1.5$&   &  \\
        Method name & $lm_1$ & $lm_2$ & $lt_1$ & $lt_2$ &$m_1$ & $m_2$ & $t_1$ & $t_2$ &  &  \\
        \hline			\hline
        \textbf{Problem} &&&&&&&&&&\\
        \hline

        \hline
        \HCP1 & 0.55 & 0.47 & \textbf{0.2} & \textbf{0.2} & 1.27 & 1.00 & 0.22 & \textbf{0.2} & 6 & 10 \\ 
        \HCP3 & 0.55 & 0.46 & 0.22 & \textbf{0.2} &  T   & 3.60 & 1.50 &  T   & 6 & 27 \\ 
        \HRP1 & 1.13 & 1.17 & 0.27 & \textbf{0.26} & 1.93 & 1.60 & 0.34 & \textbf{0.23} & 7 & 10 \\ 
        \HRP3 & 1.19 & 1.34 & 0.30 & \textbf{0.29} &  T   & 6.41 &  T   &  T   & 8 & 25 \\ 
        \Monkey1 & 0.55 & 0.45 & 0.28 & \textbf{0.25} & 184.66 &  T   &  T   &  T   & 6 & 20 \\ 
        \Monkey4 & 0.54 & 0.48 & 0.27 & \textbf{0.26} &  T   &  T   &  T   &  T   & 6 & 52 \\ 
        \Rack2011 & 0.82 & 0.78 & \textbf{0.24} & \textbf{0.24} & 23.88 & 19.43 &  T   & 8.82 & 5 & 23 \\ 
        \RackA5 & 0.93 & 1.09 & 0.46 & \textbf{0.45} & 10.99 & 6.74 & 0.86 & 2.48 & 5 & 19 \\ 
        \RackA10 & 0.97 & 1.06 & \textbf{0.41} & 0.42 & 126.14 & 37.52 &  T   &  T   & 5 & 38 \\ 
        \RackA1000 & 0.59 & 0.61 & \textbf{0.54} & \textbf{0.54} &  T   &  T   &  T   &  T   & 4 & 2250 \\ 
        \RackABCD4 & 20.71 & 5.69 & \textbf{1.55} & 3.34 & 118.49 & 106.66 & 24.62 & 4.37 & 13 & 20 \\ 
        \RackABCD8 & 165.58 & 9.92 & \textbf{2.69} & 5.10 &  T   &  T   &  T   &  T   & 13 & 40 \\ 
        \RackABCD20 &  T   & 16.41 & \textbf{3.31} & 7.87 &  T   &  T   &  T   &  T   & 13 & 100 \\ 
        \hline
        \RackA1 & \textbf{0.61} &  &  &  & 0.75 &  &  &  & 5 & 7 \\
        \RackA2 & \textbf{0.64} &  &  & & 1.54 &  &  &  & 5 & 9 \\
        \RackA3 & \textbf{0.62} &  &  &  & 3.69 &  &  & & 5 & 11 \\
        \RackA4 & \textbf{0.65} &  &  & & 36.9 &  &  &  & 5 & 13 \\
        \RackA5 & \textbf{0.65} &  &  &  & T &  &  & & 5 & 19 \\
    \bottomrule
    \end{tabular}
    \end{threeparttable}
\end{table*}

\paragraph{Sum aggregate}

We begin with the translation of $\SUMW{\bar x\in \bar T}{\F}{t}$ (Rule~19\ignore{eq:Xsum}). 
By the induction hypothesis, $\F$ is true in $I$ for every instance of a lifted variable assignment for which the formula is true in $L$ and so, the formula is true in $I$ for all $\Mul(\bar x)$ instances. The formula $\F$ is false in $L$ for every instance of a lifted variable assignment for which the formula is false in $L$. As $\tr(t)$ in $L$ and $t$ in $I$ have the same value by the induction hypothesis for numeric terms, it follows that the term $\Mul(\bar x) \times \tr(t)$ ensures that the translated formula evaluates in $L$ to the same value as the original formula in $I$.

For the translation of $\SUMW{\bar x\in \bar T}{p(\bar t[\barx],s[\bar y]) \land \F}{t}$ (Rule~18\ignore{eq:Xsumf2}), given the proof for Rule~19\ignore{eq:Xsum}, it suffices to show that $\Mul(\bar x) \times \LCM(\tr(\bar t),\tr(s))\div  (\LCM(\tr(\bar t)) \times \mul(\tr(s)))$ is the number of instances of $[\bar x:\bar l_x]$ for which the filter formula is true in $I$ when the translated filter formula is true in $L$ for $[\bar x: \bar l_x:: \bar y: \bar l_y]$ and $\bar y$ is assigned a value $\bar d_y$. 

    
Let's call $(\bar l_t,l_s)$ the value of $(\chi(\bart), \chi(s))$ in $L$ for the variable assignment $[\bar x:\bar l_x::\bar y:\bar l_y]$.
Consider a tuple $(\bar l_t,l_s)$ that is true in the lifted interpretation of $p$.
The expansion of this (true) tuple has $\LCM(\chi(\bar t),\chi(s))$ (true) tuples by the definition of the expansion of $p$.
In $I$, the term $s$ for a variable assignment $[\bar y:\bar d_y]$ evaluates to $d_s=\pi^i(l_s)$ for some $i$ (by the induction hypothesis) and there are $\mul(l_s)$ different values in the expansion of $l_s$.
So, there are $\LCM(\chi(\bar t),\chi(s))\div \mul(\chi(s))$ true tuples $(\bar d_t,d_s)$ in the expansion of  $(\bar l_t,l_s)$ when $\bar y$ is assigned a value $\bar d_y$. 

There are $\Mul(\bar x)$ different variable assignments to $\bar x$, and there are $\LCM(\chi(\bar t))$ tuples in the expansion of $\bar l_t$, so each value $\bar d_t$ in the expansion of $\bar l_t$ is produced by $\Mul(\bar x) \div \LCM(\chi(\bar t))$ different  variable assignments.

Hence, the number of variable assignments $[\bar x:\bar d_x]$  such that a true tuple $(\bar d_t,d_s)$ is obtained is  $\Mul(\bar x) \LCM(\tr(\bar t),\tr(s))\div  \LCM(\tr(\bar t)) \mul(\tr(s))$. This is a whole number as $\Mul(\bar x)$ is a multiple of $\LCM(\chi(\bar t))$ and $\LCM(\chi(\bar t),\chi(s))$  is a multiple of $\mul(\chi(s))$.

This completes the proof by structural induction.

\section{Appendix B}

This section provides more details on our results for Generative Configuration Problems.
Table~2\ignore{tab-results} in the paper is an excerpt of the more detailed Table~\ref{tab-results2}.  
Best results are shown in bold.

Our baseline is a portfolio solver using 4 traditional iterative methods to solve Generative Configuration Problems.  To generate the four solvers, we automatically transform a concrete sentence describing a configuration problem in $\mathrm{FO}(\cdot)$ into four equivalent concrete formulations of the sentence. 
The transformation consists of adding new symbols to the vocabulary (e.g., \ConfigObject), and of adding additional constraints on those symbols to the formula.
The four formulations are denoted by a two-letter word, e.g., ``$m_1$'':
\begin{itemize}
    \item the first character distinguishes between typed (``t'') and monotype (``m'', using a single type) formulations. For a typed formulation, the unsatisfiable core of the sentence is used to determine which type(s) needs a larger domain.
    \item the second character says that the assumed number of elements in a type is incremented by 1 (``1'') at each iteration, or multiplied by 1.5 (``2'').
\end{itemize}


To evaluate our method, we automatically translate the concrete sentence into a lifted version per our method, and automatically apply the 4 transformations above, for use in lifted iterative model searches.
The names of the lifted model search are prefixed by ``$l$'' (e.g., $lm_1$).
The number of lifted elements is found iteratively, and their multiplicities are searched by the solver in each iteration, until the lifted sentence is satisfiable.
Finally, we verify that an expansion of the models of the lifted sentence satisfies the concrete sentence. 

Table~1 is generated by comparing the solving time of lifted method $lt_1$ to the best solving time of $m_1, m_2, t_1$, and $ t_2$.
The performance of $lm_1, lm_2, lt_2$ are given only for information.
The number of elements (two leftmost columns) is obtained by manual counting in the obtained models.

Note that the iterative methods (except $m_1$) may result in the domain having more elements than needed for a solution.
For example, a model found with $m_2$ may have 9 domain elements, even though the solution requires only 7; two are ``unused''.
Thus, the concrete vocabularies for the iterative methods (except $m_1$) have a predicate to mark the used domain elements, and the quantification/aggregates in the theories are adapted to consider the used domain elements only.
Similarly, in the lifted domain, unused elements have multiplicity zero.

Because the unsatisfiable core algorithm used by Z3 is not deterministic, the types that are enlarged at each iteration may vary from one run to another, leading to potentially different performance results.  The durations reported in the table are representative.



\end{document}


\maketitle

\section{Appendix: proofs} 
\label{Appendix:proofs}

\includeExternalAppendix{main}

\begin{proposition} \label{prop:tuple}
    Let $L$ be a lifted structure that satisfies the regularity constraints in the lifted theory $\overline{\chi(\Th)}$ (third column of Table~\ref{table:trans}) and $I$ its expansion.
    Let $\bar t$ (resp. $\bar t_L$) be a tuple of terms occurring in the concrete theory $\Th$ (resp. its translation).
    Let $\bar x$ be the free variables of $\bar t$ and $[\bar x:\bar l]$ a lifted variable assignment.
    Let $[\bar x: \bar d]$ be a variable assignment in the expansion of the lifted one.
    If $\Reg(\bar t_L^{L[\bar x:\bar l]})$ then there exists an $i$ such that $\bar t^{I[\bar x:\bar d]}=\pi^i(\bar t_L^{L[\bar x:\bar l]})$.
\end{proposition}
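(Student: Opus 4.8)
The plan is to prove the proposition by structural induction on the term tuple $\bar t$. Since the evaluation map $(\cdot)^{L[\bar x:\bar l]}$ and the predicate $\Reg$ act componentwise, it suffices to treat a single term $t$ with free variables among $\bar x$, where $t_L$ is its translation, and to show: if $\Reg(t_L^{L[\bar x:\bar l]})$ then $t^{I[\bar x:\bar d]}=\pi^i(t_L^{L[\bar x:\bar l]})$ for some $i$. The guiding intuition is that the expansion $I$ is $\pi$-equivariant by construction, so that a term evaluates in $I$ to an element of the same $\pi$-orbit in which it evaluates in $L$; the exponent $i$ records which representative of that orbit is picked out, and this is dictated partly by the chosen expansion $[\bar x:\bar d]$ of $[\bar x:\bar l]$ and partly by the regularity constraints, which is exactly what forces the lifted value to be regular.

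For the base cases: if $t$ is a variable $x_k$, then $t_L=x_k$ as well, and by definition of ``a variable assignment in the expansion of $[\bar x:\bar l]$'' the value $d_k$ lies in the $\pi$-orbit of $l_k$, say $d_k=\pi^i(l_k)$, which gives the claim directly (regularity is not even needed here). If $t$ is a constant $c$, we read off how the translation treats constants (third column of Table~\ref{table:trans}) together with how the expansion interprets the symbols of the lifted vocabulary: the concrete value of $c$ and its lifted value differ by a fixed power of $\pi$, supplying the required $i$; if $c$ is translated to a designated representative only under a regularity guard, this is precisely the place where the hypothesis $\Reg(c^{L})$ is consumed.

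For the inductive step, let $t=f(\bar s)$. The first point is that $\Reg\big((f(\bar s))_L^{L[\bar x:\bar l]}\big)$ must entail $\Reg\big(\bar s_L^{L[\bar x:\bar l]}\big)$, so that the induction hypothesis applies to the subterms $\bar s$; this follows from the shape of the regularity constraints in $\overline{\chi(\Th)}$, which by design restrict the interpretation of each function symbol $f_L$ so that a regular output can only arise from regular inputs (equivalently, these constraints are closed under taking subterms). Granting this, the induction hypothesis gives $\bar s^{I[\bar x:\bar d]}=\pi^i(v)$ where $v=\bar s_L^{L[\bar x:\bar l]}$. Now combine (i) the $\pi$-equivariance of the expansion, $f^{I}(\pi^i(v))=\pi^i(f^{I}(v))$, with (ii) the regularity constraint attached to $f$ in Table~\ref{table:trans}, which forces $f^{I}(v)=\pi^j(f_L^{L}(v))$ for some $j$ depending on $v$. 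Composing, $t^{I[\bar x:\bar d]}=f^{I}(\pi^i(v))=\pi^{i+j}\big(f_L^{L}(v)\big)=\pi^{i+j}\big((f(\bar s))_L^{L[\bar x:\bar l]}\big)$, so $i+j$ is the desired exponent. If the translation rewrites $f(\bar s)$ to something other than $f_L(\bar s_L)$ (inserting a guard or an auxiliary symbol), the same computation goes through after unfolding that definition, again using the corresponding row of the table.

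The main obstacle is the inductive step, and within it two linked issues: justifying that regularity of the composite lifted value propagates down to the subterms, so that the induction hypothesis is applicable at all — this is entirely a matter of reading off the precise form of the regularity constraints in the third column — and bookkeeping the exponent of $\pi$ as it is pushed through a function application, which hinges on the $\pi$-equivariance of $I$ together with the $f$-specific regularity constraint. The base cases and the reduction from tuples to single terms are routine.
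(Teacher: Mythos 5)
Your overall strategy (induction on term structure, reduction of tuples to single terms, base cases for variables and constants, a function-application step using regularity of $f$) parallels the paper's, but one structural choice creates a real gap. You induct directly on the statement ``$t^{I[\bar x:\bar d]}=\pi^i(t_L^{L[\bar x:\bar l]})$ for some $i$'', whose hypothesis is $\Reg(t_L^{L[\bar x:\bar l]})$; to apply the induction hypothesis to the subterms $\bar s$ of $f(\bar s)$ you therefore need $\Reg(\bar s_L^{L[\bar x:\bar l]})$, and you assert that a regular output can only arise from regular inputs ``by the shape of the regularity constraints''. You never establish this, and the proposition's hypotheses do not supply it: only the top-level tuple is assumed regular. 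The paper sidesteps the issue by proving a weaker invariant by induction, namely membership in the expansion, $\bar t^{I[\bar x:\bar d]} \in \Exp(\bar t_L^{L[\bar x:\bar l]})$, which requires no regularity assumption on the particular value (only that $L$ globally satisfies the regularity constraints, so that each regular function's expansion maps expansions into expansions), and then converts membership into the $\pi^i$ form once, at the very end, from the single top-level hypothesis $\Reg(\bar t_L^{L[\bar x:\bar l]})$. To repair your argument you must either actually prove the downward propagation of regularity to subterms, or restructure the induction around the membership lemma as the paper does.

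A second, smaller omission: the paper treats terms of depth $k$ that evaluate to numbers as a separate inductive case, discharged by appeal to Theorem~\ref{theorem:modelpreservation}; arithmetic and aggregate terms do not fit your ``$\pi$-equivariance of $f^I$'' template (their lifted values have singleton expansions and the justification is different). Your constant base case is also fuzzier than necessary: in the paper, $c_L^L=l$ with $\mul(l)=1$, so $\Exp(l)=\{l\}$ and $c^I=l$ outright, with no ``fixed power of $\pi$'' to track.
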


\begin{proof}
    We prove the lemma that, for any concrete tuple $\bard$ in the expansion of lifted tuple $\bar l$, $\bar t^{I[\bar x:\bar d]}$ is in the expansion of $\bar t_L^{L[\bar x:\bar l]}$, i.e., $\bar t^{I[\bar x:\bar d]} \in \Exp(\bar t_L^{L[\bar x:\bar l]})$: the proposition follows from the regularity of $\bar t_L^{L[\bar x:\bar l]}$.
    We prove this lemma by induction over the depth of terms. The induction hypothesis is that the lemma holds for depth $k-1$.

    \begin{itemize}
        \item Base case: $t_j$ is a constant $c$.  The variable assignments are irrelevant. Let $c()_L^L  =l$; as $\mul(l)=1$, let $\Exp(l) = \{l\}$. Then $c()^I = l \in \Exp(l)$. 

        \item Base case: $t_j$ is a variable $x$. By Definition~\ref{def:expansion-assignment}, the concrete value assigned to $x$ must be in the expansion of its lifted value.

        \item These are the only terms of depth 1, hence  the induction hypothesis  holds for $k=1$.

        \item Induction step: $t_j$ is of depth $k$ and evaluates to a number. We can apply one of the cases of Theorem~\ref{theorem:modelpreservation} to show that $t_j$ evaluates in $I[\bar x:\bar d]$ to the number in the singleton expansion of $t_{jL}^{L[\bar x:\bar l]}$. Hence $t_j^{I[\bar x:\bar d]} \in \Exp(\bar t_{jL}^{L[\bar x:\bar l]})$.

        \item Induction step: $t_j$ is a term of depth $k$ of the form $f(\bar s)$.
        By the induction hypothesis, for all $n$, $s_n^{I[\bar x:\bar d]} \in \Exp(s_{nL}^{L[\bar x:\bar l]})$.
        Hence, by construction of the expansion of regular function $f$, $t_j^{I[\bar x:\bar d]}=f(\bar s)^{I[\bar x:\bar d]} = \pi^i(f(\bar s_L)^{L[\bar x:\bar l]}) \in \Exp(t_{jL}^{L[\bar x:\bar l]})$.

    \end{itemize}

This covers all cases and completes the proof.
\end{proof}